\newcommand{\cat}{\ensuremath{\mathbf{C}}\xspace} %An arbitrary category
\newcommand{\Coalg}[1]{\ensuremath{\Cat{CoAlg}(#1)}}
\newcommand{\Cat}[1]{\ensuremath{\mathbf{#1}}\xspace} %A specific category
\newcommand{\Sets}{\Cat{Sets}}
\newcommand{\Conv}{\Cat{Conv}}
\newcommand{\Hilbisometry}{\Cat{Hilb_{Isomet}}}
\newcommand{\D}{\mathcal{D}}
\newcommand{\DM}{\mathcal{D}\!\mathcal{M}}
\newcommand{\Ef}{\mathcal{E}\!f}
\newcommand{\beh}[1]{\mathrm{beh}_{#1}} %Behaviour morphism
\newcommand{\tail}{\mathrm{tail}}
\newcommand{\id}{\mathrm{id}} %Identity morphism
\newcommand{\effect}{\epsilon}
\newcommand{\unops}{S}
\newcommand{\meass}{E}
\newcommand{\pow}[1]{\mathcal{P}(#1)} %Power set
\newcommand{\ost}{\tfrac{1}{\sqrt{2}}}
\newcommand{\bra}[1]{\langle #1 |}
\newcommand{\ket}[1]{| #1 \rangle}
\newcommand{\bk}[3]{\langle #1 | #2 | #3 \rangle}
\newcommand{\ua}{\mkern-5mu \uparrow}
\newcommand{\da}{\mkern-5mu \downarrow}
\newcommand{\ku}[1]{\ket{\ua \! #1}}
\newcommand{\kd}[1]{\ket{\da \! #1}}
\newcommand{\bu}[1]{\bra{\uparrow \! #1}}
\newcommand{\bd}[1]{\bra{\downarrow \! #1}}
\DeclareMathOperator{\supp}{supp}
\DeclareMathOperator{\tr}{tr}
\newtheorem{thm}{Theorem}
\newtheorem{lemma}[thm]{Lemma}
\newtheorem{proposition}[thm]{Proposition}
\theoremstyle{definition}
\newtheorem{definition}[thm]{Definition}
\newtheorem{example}[thm]{Example}
\theoremstyle{remark}
\title{Coalgebraic Quantum Computation}
\author{Frank Roumen
\institute{Institute for Mathematics, Astrophysics, and Particle Physics
(IMAPP) \\ Radboud University Nijmegen}
\email{F.Roumen@math.ru.nl}
}
\begin{document}

\maketitle

\begin{abstract}
   Coalgebras generalize various kinds of dynamical systems occuring in
   mathematics and computer science. Examples of systems that can be modeled
   as coalgebras include automata and Markov chains. We will present a
   coalgebraic representation of systems occuring in the field of quantum
   computation, using convex sets of density matrices as state spaces.
   This will allow us to derive a method to convert quantum
   mechanical systems into simpler probabilistic systems with the same
   probabilistic behaviour. 
\end{abstract}

\section{Introduction}\label{sec_intro}

For studying complex computational systems, it is often helpful to use an
abstract description of the systems. This helps to focus on the most important
parts of the system under consideration and see similarities and differences
between distinct kinds of systems. The notion of a coalgebra, which originated
in category theory, gives such an abstract view on dynamical systems. There
is a large class of systems that can be described using coalgebras, including
finite automata, Turing machines, Markov chains, and differential equations.
Moreover one can reason about these system using a unified theory. An overview
can be found in \cite{rutten_universalcoalg}.

Our aim is to describe systems occuring in the field of quantum computation in
the coalgebraic framework. This will facilitate comparison between quantum
systems and, for example, deterministic and probabilistic systems. It will
also enable us to apply facts from the general theory of coalgebras to quantum
mechanical systems. In particular we will see what the minimization procedure
from \cite{abhkms_minimization} amounts to for quantum coalgebras.

The use of categories in the foundations of quantum physics was initiated in
\cite{ac_protocols}, via tensor categories, and in \cite{bi_topos}, via topoi.
Representation of quantum systems with coalgebras is also considered in
\cite{abramsky_chucoalg}. However, there are several differences between
\cite{abramsky_chucoalg} and our work. We focus on the dynamics of systems via
unitary operators, whereas \cite{abramsky_chucoalg} models the dynamics of
iterated measurements.

The outline of this paper is as follows. In Section~\ref{sec_coalgebras} we
discuss preliminaries on coalgebras. To illustrate the theory of coalgebras,
Section~\ref{sec_probabilisticsystems} contains a coalgebraic description of
probabilistic systems. Some of the constructions in that section are also
necessary for understanding quantum systems coalgebraically. The main original
contributions are in Sections~\ref{sec_model} and \ref{sec_minimization}.
Section~\ref{sec_model} shows how to represent quantum systems as coalgebras,
and discusses the role of final coalgebras in this framework. Finally, in
Section~\ref{sec_minimization}, the minimization procedure from
Section~\ref{sec_coalgebras} is applied to quantum coalgebras.

\section{Coalgebras}\label{sec_coalgebras}

In this section we will present some preliminary material on coalgebras, which
are abstract generalizations of state-based systems. Let $F$ be an endofunctor
on a category \cat. An \emph{$F$-coalgebra} consists of an object $X\in \cat$
and a morphism $c: X \to F(X)$ in \cat.  The object $X$ is called the
\emph{state space} of the coalgebra, and the morphism $c$ is called the
\emph{dynamics}.  The functor $F$ in the definition is a parameter that
determines the structure of the dynamics, and hence the kind of system.
Coalgebras for a fixed endofunctor constitute a category, in which a morphism
from $c : X \to F(X)$ to $d : Y \to F(Y)$ is a morphism $f: X \to Y$ in \cat
making the following diagram commute.
\[ \xymatrix{ X \ar[r]^f \ar[d]_c & Y \ar[d]^d \\
              F(X) \ar[r]_{F(f)}  & F(Y)} \]
The category of $F$-coalgebras and homomorphisms is denoted $\Coalg{F}$.

An $F$-coalgebra $\omega : \Omega \to F(\Omega)$ is said to be \emph{final} if
it is a final object in the category $\Coalg{F}$, i.e. for every coalgebra $c:
X \to F(X)$ there exists a unique coalgebra morphism from $c$ to $\omega$.
This unique morphism is called the \emph{behaviour morphism} of $c$ and is
denoted $\beh{c} : X \to \Omega$.

\begin{example}
    \label{ex_automata}
    In theoretical computer science, finite automata provide a mechanical way
    to describe languages. An \emph{alphabet} is a finite set, whose elements
    we call \emph{letters} or \emph{symbols}. The set of finite sequences, or
    \emph{words}, with entries in an alphabet $A$ is written as $A^*$.  This
    set of words forms a monoid, where the monoid operation is concatenation
    and the empty word $\varepsilon$ acts as an identity element. The monoid
    $A^*$ is the free monoid over $A$. A \emph{language} over $A$ is a subset
    of $A^*$.
    A \emph{deterministic automaton} over an alphabet $A$ consists of a set
    $X$ of states, a transition function $\delta : X \times A \to X$, and a
    subset $U\subseteq X$ of accepting states. 
    The transition function $\delta : X \times A \to X$ can be extended to a
    function from $X \times A^*$ to $X$, also denoted $\delta$, by defining
    $\delta(x,\varepsilon) = x$ and $\delta(x,au) =
    \delta(\delta(x,a),u)$. 

    Automata are often graphically represented by their state diagrams. In a
    state diagram, each state of the automaton is drawn as a circle. A
    transition $\delta(x,a) = y$ is indicated by an arrow, labeled with the
    letter $a$, from the circle $x$ to the circle $y$. 
    Accepting states are drawn as double circles. As an example, consider the
    automaton over $A = \{a,b\}$ with the following diagram.
    \[
    \entrymodifiers={++[o][F-]} \SelectTips{cm}{} \xymatrix @-1pc { 
     *++[o][F=]{x_0} \ar@(ul,ur)[]^a \ar[rr]^b & *\txt{}& {x_1}
    \ar[ddll]_b \ar@/^/[dd]^a \\ *\txt{} \\  {x_2} \ar@(ul,dl)_{a,b}
    & *\txt{} & *++[o][F=]{x_3} \ar@/^/[uu]^b \ar[ll]^a }\]
    Here, the set of
    states is $X = \{x_0,x_1,x_2,x_3\}$, the transition function is completely
    determined by the arrows in the diagram, and the subset of accepting
    states is $\{x_0,x_3\}$.

    The transition function and the subset of accepting states of an automaton
    can be merged into one function of type $X \to 2 \times X^A$ Thus
    deterministic automata are coalgebras for the endofunctor $F(X) = 2
    \times X^A$ on $\Sets$. This functor $F$ has a final coalgebra whose
    underlying state space is the set $\pow{A^*}$ of all languages over $A$.
    To endow this set with an automaton structure, define the transition
    function by $\delta(L,a) = \{u \in A^* \mid au \in L \}$, and let the
    subset of accepting states be $\{L \in \pow{A^*} \mid \varepsilon \in
    L\}$. Let $c : X \to 2 \times X^A$ be an arbitrary automaton, with
    transition function $\delta$ and accepting states $U$. Then the
    behaviour morphism $\beh{c} : X \to \pow{A^*}$ assigns to a state $x$ the
    language $\{ u \in A^* \mid \delta(x,u) \in U \}$. Thus the coalgebraic
    description gives a convenient way to characterize the language recognized
    by an automaton.
\end{example}

There are several other systems that can be modeled as coalgebras in such a
way that the morphism into the final coalgebra corresponds to the observable
behaviour of the system, see \cite{rutten_universalcoalg} for more examples.

A morphism into the final coalgebra provides a method to obtain an external
description of a system, given an internal description. We will now turn our
attention to the reverse problem: if we know the behaviour of a system, how do
we find a coalgebra having that behaviour? Of course, in practice there are
many coalgebras with the same behaviour. We are often interested in the most
efficient one, i.e. the coalgebra with the smallest state space among those
with the same behaviour. Finding the coalgebra with a minimal state space is
known as the problem of minimization. Here we will focus on the special case
of minimizations of automata.

Minimal realizations of automata were first constructed in \cite{nerode}. This
was generalized to categorical settings in \cite{goguen}, see also
\cite{abhkms_minimization,ruttenautomatacoinduction} for the coalgebraic
version.

\begin{definition}
    A \emph{subcoalgebra} of a coalgebra $c : X \to F(X)$ in $\Sets$ is a
    coalgebra $d : Y \to F(Y)$ with $Y\subseteq X$ for which the inclusion map
    $Y \hookrightarrow X$ is a coalgebra morphism.
\end{definition}

\begin{definition}
    Let $c : X \to F(X)$ be a coalgebra in $\Sets$, and $S\subseteq X$. The
    subcoalgebra of $c$ \emph{generated} by $S$ is the smallest subcoalgebra
    $\langle S \rangle$ of $c$ whose state space includes $S$. If $S$ is a
    singleton $\{s\}$, then we write $\langle S \rangle = \langle s \rangle$.
\end{definition}

\begin{example}
    Consider an automaton $c : X \to 2 \times X^A$, and denote its transition
    function by $\delta$. For any subset $S \subseteq X$, the subcoalgebra
    $\langle S \rangle$ of $c$ generated by $S$ is obtained by closing $S$
    under the transitions of $c$. Thus the coalgebra $\langle S \rangle$ has
    state space
    \[ \left\{ \delta(s)(u) \mid s \in S, u\in A^* \right\}. \]
    The observations and transitions are inherited from $c$. 
    \label{ex_automatonsubcoalgebra}
\end{example}

The behaviour of an automaton together with an initial state is the language
recognized by that state. Therefore the minimization problem amounts to the
following question: given a language $L$, what is the minimal automaton
recognizing $L$?

\begin{proposition}
    Let $L$ be a language over $A$. The coalgebra with the least number of
    states recognizing $L$ is the subcoalgebra $\langle L \rangle$ of the
    final coalgebra $\pow{A^*}$, with initial state $L \in \langle
    L\rangle$.
\end{proposition}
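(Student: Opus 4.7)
The plan is to split the claim into two parts: first, that $\langle L \rangle$ with initial state $L$ does recognise the language $L$; second, that no automaton with strictly fewer states can recognise $L$.

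For the first part, I would use that $\langle L \rangle$ is a subcoalgebra of the final coalgebra $\pow{A^*}$, so the inclusion $\iota : \langle L \rangle \hookrightarrow \pow{A^*}$ is a coalgebra morphism. By finality, $\iota$ must coincide with the behaviour morphism $\beh{\langle L \rangle}$. In Example~\ref{ex_automata} the behaviour morphism was identified with the map sending a state to the language it recognises, and since $\iota(L) = L$, the state $L \in \langle L \rangle$ indeed recognises $L$.

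For minimality, let $c : X \to 2 \times X^A$ be any automaton with chosen initial state $x_0 \in X$ such that $\beh{c}(x_0) = L$. Without loss of generality I may restrict to the reachable part $\langle x_0 \rangle \subseteq X$; this can only decrease the cardinality. Now I consider the image $\beh{c}(\langle x_0 \rangle) \subseteq \pow{A^*}$. The key observation is that the image of a coalgebra along a coalgebra morphism is again a subcoalgebra of the codomain (this is a straightforward diagram chase using that $\pow{A^*}$ has its transitions and accept-predicate inherited, and $\beh{c}$ commutes with the dynamics). Hence $\beh{c}(\langle x_0 \rangle)$ is a subcoalgebra of $\pow{A^*}$ containing $L = \beh{c}(x_0)$. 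Since $\langle L \rangle$ is by definition the \emph{smallest} such subcoalgebra, $\langle L \rangle \subseteq \beh{c}(\langle x_0 \rangle)$, and therefore
\[ |\langle L \rangle| \;\leq\; |\beh{c}(\langle x_0 \rangle)| \;\leq\; |\langle x_0 \rangle| \;\leq\; |X|. \]

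The only genuine step that requires care is the claim that $\beh{c}(\langle x_0 \rangle)$ is a subcoalgebra of $\pow{A^*}$; everything else is unpacking definitions. Concretely, one checks that for $L' = \beh{c}(x) \in \beh{c}(\langle x_0 \rangle)$ and $a \in A$, the Brzozowski derivative $\delta(L', a) = \{u \mid au \in L'\}$ equals $\beh{c}(\delta(x,a))$, which lies in the image because $\delta(x,a) \in \langle x_0 \rangle$. This is exactly the commuting square defining $\beh{c}$ as a coalgebra morphism, so it presents no obstacle beyond bookkeeping.
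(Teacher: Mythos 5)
Your proof is correct. The paper itself states this proposition without an in-text proof (it defers to the cited literature on minimal realizations), so there is nothing to compare line by line; your argument is the standard one that those references, and the paper's later remark about factorization systems, formalize: the inclusion $\langle L \rangle \hookrightarrow \pow{A^*}$ is forced by finality to be the behaviour morphism, so the state $L$ recognizes $L$, and for any recognizer $c : X \to 2 \times X^A$ with $\beh{c}(x_0) = L$ the image of $\beh{c}$ is a subcoalgebra of $\pow{A^*}$ containing $L$, whence $|\langle L \rangle| \leq |X|$ by minimality of the generated subcoalgebra. One small simplification: the initial restriction to the reachable part $\langle x_0 \rangle$ is not needed, since $\beh{c}(X)$ is already a subcoalgebra of $\pow{A^*}$ containing $L$, giving $|\langle L \rangle| \leq |\beh{c}(X)| \leq |X|$ directly; your verification via Brzozowski derivatives that images of coalgebra morphisms are subcoalgebras is exactly the right (and only nontrivial) ingredient.
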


To generalize this to arbitrary coalgebras, one needs a factorization system
on the underlying category. The abstract monos of the factorization system
play the role of the subcoalgebras. This is worked out in
\cite{abhkms_minimization,roumen}.

\section{Probabilistic systems}\label{sec_probabilisticsystems}

Before representing quantum systems coalgebraically, it is useful to consider
probabilistic systems first, because quantum mechanical behaviour is
probabilistic.

Transitions to a probability distribution over successor states will be
modeled using functors involving the so-called \emph{distribution monad}.
Define a functor $\D : \Sets \to \Sets$ sending a set $X$ to the set of finite
convex combinations, or probability distributions, on $X$:
\[ 
%\begin{array}{c} 
\textstyle
    \D(X) = \{ \varphi : X \to [0,1] \mid \supp(\varphi) \text{ is finite and }
    \sum_{x\in X} \varphi(x) = 1 \}. 
%\end{array}
\]
Here $\supp(\varphi) = \{x \in X \mid \varphi(x) \neq 0 \}$ is the
\emph{support} of $\varphi$.
An element $\varphi$ of $\D(X)$ can also be written as a formal sum $r_1 x_1 +
\cdots + r_n x_n$, where $\{x_1,\ldots, x_n\} = \supp(\varphi)$ and $r_i =
\varphi(x_i)$. On a morphism $f : X \to Y$, the functor $\D$ is defined as
\[ \D(f)(r_1 x_1 + \cdots + r_n x_n) = r_1 f(x_1) + \cdots + r_n f(x_n). \]
%Some terms in the formal sum may be added, or become zero.
The functor $\D$ is a monad with unit and multiplication
\[
\begin{array}{rclcrcl}
    \eta : X & \to & \D(X) & \qquad & \mu : \D(\D(X)) & \to & \D(X) \\
    x & \mapsto & 1x & \qquad & \sum_i r_i (\sum_j s_{ij} x_{ij}) & \mapsto &
    \sum_{i,j} r_i s_{ij} x_{ij}
\end{array}
\]

We will show how the distribution monad is used in the representation of
probabilistic systems by an example.

\begin{example}
    \label{ex_markovchain}
    Imagine a particle moving on the following graph.
    \[ \entrymodifiers={++[o][F-]}
    \SelectTips{cm}{}
    \xymatrix @-1pc { {x_0} \ar@{-}[rr] \ar@{-}[dd] & *\txt{} & {x_1}
    \ar@{-}[dd] \\
    *\txt{} \\
    {x_2} \ar@{-}[rr] & *\txt{} & {x_3} } \]
    Let $X = \{x_0, x_1, x_2, x_3\}$ be the set of vertices. The particle
    starts at one of the vertices of the
    graph.  In each time step, the particle can move to one of the two
    adjacent points. Each of these points is chosen with probability
    $\frac{1}{2}$. This system can be written as a coalgebra for the
    distribution monad:
    \[ \begin{array}{rcl}
        c :  X &\to& \D(X) \\
         x_0 &\mapsto& \tfrac{1}{2}x_1 + \tfrac{1}{2}x_2 \\
         x_1 &\mapsto& \tfrac{1}{2}x_0 + \tfrac{1}{2}x_3 \\
         x_2 &\mapsto& \tfrac{1}{2}x_0 + \tfrac{1}{2}x_3 \\
         x_3 &\mapsto& \tfrac{1}{2}x_1 + \tfrac{1}{2}x_2
    \end{array} \]
    A coalgebra for $\D$ is called a \emph{Markov chain}.

    We consider the trajectory of the particle when it starts in the vertex
    $x_0$. Let $\varphi_n \in \D(X)$ denote the probability distribution over
    the vertices of the graph after $n$ steps. Then the first few values of
    $\varphi_n$ are:
    \[ \begin{array}{rcl}
        \varphi_0 &=& x_0 \\
        \varphi_1 &=& \tfrac{1}{2}x_1 + \tfrac{1}{2}x_2 \\
        \varphi_2 &=& \tfrac{1}{2}x_0 + \tfrac{1}{2}x_3 \\
        \varphi_3 &=& \tfrac{1}{2}x_1 + \tfrac{1}{2}x_2
    \end{array} \]
    We obtain a repetition after three steps, so $\varphi_{2n} = \varphi_2$
    for $n > 0$ and $\varphi_{2n+1} = \varphi_1$ for all $n\in \mathbb{N}$.
\end{example}

It is reasonable to view the sequence $(\varphi_n)_{n \in \mathbb{N}}$ as the
behaviour of the Markov chain. However, if we model the system as a
$\D$-coalgebra, the morphism into the final coalgebra does not give the
desired behaviour. This can be solved by replacing the underlying category
$\Sets$ of the coalgebra by another category. There are at least two possible
replacements known in the literature: \cite{jacobs_generictrace} proposes to
work in the Kleisli category of the monad $\D$, and \cite{sbbr_powersetcoalg}
proposes to work in the category of Eilenberg-Moore algebras for $\D$. The
solutions work for coalgebras for several monads, not only for the
distribution monad. Both approaches are compared in \cite{jss_tracedet}. Here
we will briefly describe how to model probabilistic systems in the category of
Eilenberg-Moore algebras, since this approach will also be used for quantum
systems.

An Eilenberg-Moore algebra for the distribution monad is called a convex set.
In a convex set $X$, we can assign to each convex combination $\sum_{i=1}^n
r_i = 1$ a function $X^n \to X$ denoted $(x_1,\ldots,x_n) \mapsto \sum_{i=1}^n
r_i x_i$.  A homomorphism of convex sets preserves all convex combinations and
is called a convex or affine map. The category of convex sets and maps is
written as $\Conv$, and is also described in \cite{jacobsconvex}.

The next result ensures that several functors on the category $\Conv$ have a
final coalgebra, which enables us to speak about the behaviour of
probabilistic systems.

\begin{lemma}
    \label{lem_generalizedautomata}
    Let $\cat$ be a category with all products. Fix an object $B\in \cat$ and
    a set $A$. The functor $F : \cat \to \cat$ defined by $F(X) = B \times
    X^A$, where $X^A$ denotes a power, has a final coalgebra whose underlying
    state space is the power $B^{A^*}$.
\end{lemma}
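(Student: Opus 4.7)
The plan is to equip $B^{A^*}$ with a canonical $F$-coalgebra structure read off from the projections of the power, define the behaviour morphism by induction on word length, and verify finality by working entirely through the universal property of the power: a morphism $f : Y \to B^{A^*}$ is determined by the family of components $\pi_u \circ f : Y \to B$ indexed by $u \in A^*$.

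First I would define $\omega : B^{A^*} \to B \times (B^{A^*})^A$ by giving its two components. For the $B$-component I take the projection $\pi_\varepsilon$ at the empty word; for each $a \in A$, the $a$-th component of the $(B^{A^*})^A$-part is the unique morphism $B^{A^*} \to B^{A^*}$ whose $u$-th component is $\pi_{au}$. Intuitively $\omega$ reads off the output at $\varepsilon$ and otherwise shifts a word by prepending its first letter.

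For an arbitrary $F$-coalgebra $c : X \to B \times X^A$, I would write $c = \langle o, \langle t_a \rangle_{a \in A}\rangle$ with $o : X \to B$ and $t_a : X \to X$, and extend to $t_u : X \to X$ for each $u \in A^*$ by $t_\varepsilon = \id_X$ and $t_{au} = t_u \circ t_a$. Then I define $\beh{c} : X \to B^{A^*}$ to be the unique morphism satisfying $\pi_u \circ \beh{c} = o \circ t_u$ for all $u$. To verify that this is a coalgebra morphism I test the identity $\omega \circ \beh{c} = (B \times (\beh{c})^A) \circ c$ against each projection: the $B$-part reduces to $o = o$, and the $u$-component of the $a$-th transition gives $o \circ t_u \circ t_a = o \circ t_{au}$ on both sides.

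For uniqueness, given any coalgebra morphism $h : X \to B^{A^*}$, I would prove $\pi_u \circ h = o \circ t_u$ by induction on the length of $u$. The base case $u = \varepsilon$ follows from projecting the morphism equation onto $B$; the step case uses the equation $h \circ t_a = (\pi_a \circ \omega) \circ h$ extracted from the $a$-th transition of the morphism equation, together with $\pi_u \circ (\pi_a \circ \omega) = \pi_{au}$ built into the definition of $\omega$. The main obstacle is simply the categorical bookkeeping: since $\cat$ is only assumed to have products, no element-chasing is available, so every identity must be verified by postcomposing with the projections $\pi_u$ of $B^{A^*}$ and with the two factors of $B \times (B^{A^*})^A$.
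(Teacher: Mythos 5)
Your proposal is correct and follows essentially the same route as the paper's proof: the same coalgebra structure on $B^{A^*}$ (first component $\pi_\varepsilon$, second component with coordinates $\pi_{au}$), the same inductive extension $t_\varepsilon = \id$, $t_{au} = t_u \circ t_a$, the behaviour morphism with components $o \circ t_u$, and uniqueness by induction on words; you merely spell out the component-wise verifications that the paper leaves implicit.
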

\begin{proof}
    %The underlying object of the final coalgebra is the power $B^{A^*}$. The
    The projection $B^{A^*} \to B$ onto the coordinate with index $u\in A^*$
    will be denoted $\pi_u$. The dynamics is a map $\omega : B^{A^*} \to B
    \times (B^{A^*})^A$, whose first component is $\pi_{\varepsilon}$ and
    whose second component in the coordinates $a\in A$, $u\in A^*$ is
    $\pi_{au}$. We will now prove the finality of $\omega$. Given a coalgebra
    $c = \langle f, \langle g_a \rangle_{a\in A} \rangle : X \to B\times X^A$,
    define $\beh{c} : X \to B^{A^*}$ as follows: first extend the family of
    morphisms $(g_a)_{a\in A}$ for $a\in A$ to a family $(g_u)_{u\in A^*}$ by
    defining inductively
    \[ \begin{array}{rcl}
        g_{\varepsilon} &=& \id, \\
        g_{au} &=& g_u \circ g_a.
    \end{array} \]
    Then $\beh{c} = \langle f \circ g_u \rangle_{u\in A^*}$ is a coalgebra
    morphism from $c$ to $\omega$.
    %\[ \begin{array}{rcl}
        %\omega \circ \beh{c} &=& \langle \pi_{\varepsilon} , \langle \langle
        %\pi_{au} \rangle _{u\in A^*} \rangle _{a\in A} \rangle \circ \langle f
        %\circ g_u \rangle_{u \in A^*} \\
        %&=& \langle f \circ g_{\varepsilon}, \langle \langle f \circ g_{au}
        %\rangle _{u \in A^*} \rangle _{a\in A} \rangle \\
       %&=& \langle f \circ \id, \langle \langle f \circ g_u \rangle_{u\in
        %A^*} \circ g_a \rangle_{a\in A} \rangle \\
        %&=& \langle f, \langle \beh{c} \circ g_a \rangle_{a\in A} \rangle \\
        %&=& (\id \times \beh{c}^{A}) \circ c.
    %\end{array} \]

    To show that $\beh{c}$ is the unique such coalgebra morphism, let $\varphi
    : X \to B^{A^*}$ be any coalgebra morphism, and denote the component with
    coordinate $u\in A^*$ by $\varphi_u$.  Induction on the word $u\in A^*$
    proves that $\varphi_u = f \circ g_u$.
    %First suppose that $u = \varepsilon$. Since
    %$\varphi$ is a coalgebra morphism, 
    %\begin{equation}
        %(\id \times \varphi^A) \circ c = \omega \circ \varphi.
        %\label{xeqcoalgmorphism}
    %\end{equation}
    %Taking the first component of this equation gives
    %\[ f = \pi_1 \circ \omega \circ \varphi = \pi_{\varepsilon} \circ \varphi
    %= \varphi_{\varepsilon}. \]
    %This proves the assertion for the empty word $\varepsilon$. Now assume
    %that $\varphi_u = f \circ g_u$. We have to show that $\varphi_{au} = f
    %\circ g_{au}$. Take the second component of \eqref{xeqcoalgmorphism} to
    %obtain
    %\[ \varphi^A \circ g = \langle \langle \pi_{au} \rangle _{u\in A^*}
    %\rangle _{a\in A} \circ \varphi = \langle \langle \varphi_{au} \rangle_u
    %\rangle_a. \]
    %Pre-composing both sides with $\pi_u \circ \pi_a$ results in
    %\[ \varphi_{au} = \pi_u \circ \pi_a \circ \varphi^A \circ g = \pi_u \circ
    %\varphi \circ g_a = \varphi_u \circ g_a = f \circ g_u \circ g_a = f \circ
    %g_{au}. \]
    %Therefore the claim holds for all words $u$.
\end{proof}

\begin{example}
    \label{ex_markovcoalgebra}
    The coalgebra from Example~\ref{ex_markovchain} can also be represented as
    a coalgebra in $\Conv$, in such a way that the morphism into the final
    coalgebra yields the list of probability distributions associated to the
    Markov chain. Fix a set $X$ and take the functor $F(Y) = \D(X) \times Y$
    on the category $\Conv$, where $X$ is the set of vertices of the graph.
    Represent the Markov chain as an $F$-coalgebra $d$ with state space
    $\D(X)$, i.e. a convex map $\D(X) \to \D(X) \times \D(X)$. Since $\D(X)$
    is the free convex set on $X$, it suffices to define $d$ on the set $X$ of
    generators.  Let $d(x) = (1x,c(x)) \in \D(X) \times \D(X)$. From
    Lemma~\ref{lem_generalizedautomata} it follows that the final
    $F$-coalgebra exists and has state space $\D(X)^{\mathbb{N}}$, so we
    obtain a map $\beh{d} : \D(X) \to \D(X)^{\mathbb{N}}$. This behaviour map
    sends the initial state $x_0 \in \D(X)$ to the list of probability
    distributions $(\varphi_n)_{n\in \mathbb{N}}$ obtained in
    Example~\ref{ex_markovchain}.
\end{example}

\section{Coalgebraic model of quantum computation}\label{sec_model}

In this section we will show how to model discrete quantum systems as
coalgebras. We will first introduce two running examples. The first example is
the class of quantum automata. This quantum analogue of deterministic automata
was defined in \cite{moore}. There are two differences between our definition
and \cite{moore}: we generalize the output projections to effects on a Hilbert
space, and we ignore initial states, which is often more convenient for
coalgebras.

\begin{definition}
    A \emph{quantum language} over an alphabet $A$ is a function $A^* \to
    [0,1]$. One can think of a quantum language as a fuzzy or probabilistic
    language.  The function assigns to a word in $A^*$ the probability that it
    is in the language.

    A \emph{quantum automaton} over an alphabet $A$ consists of:
    \begin{itemize}
        \item A complex Hilbert space $H$;
        \item For each letter $a\in A$, a unitary operator $\delta_a : H \to
            H$;
        \item An effect $\effect$ on $H$, i.e. a linear operator $\effect : H
            \to H$ satisfying $0 \leq \effect \leq \id$.
    \end{itemize}
    Define, for each word $u\in A^*$, the extended transition operator
    $\delta_u : H \to H$ inductively by
    \[ \delta_\varepsilon(\psi) = \psi, \]
    \[ \delta_{au}(\psi) = \delta_u(\delta_a(\psi)). \]
    The probability that the word $u$ is accepted by the automaton, starting
    from initial state $\psi$ with norm 1, is
    \[\bk{\delta_u \psi}{\effect}{\delta_u \psi}, \] that is, the
    probability that measurement of $\effect$ gives `yes' in state $\delta_u
    \psi$.

    The quantum language \emph{recognized} by a state $\psi \in H$ is the
    function $A^* \to [0,1]$ that sends a word $u\in A^*$ to the probability
    that $u$ is accepted by $\psi$.
\end{definition}

The quantum walks form another class of examples of systems, discussed
extensively in \cite{elias}. See also \cite{jacobswalks} for coalgebraic
versions. We will take the example from the latter reference.

\begin{example}
    \label{ex_qwalk}
    Consider a particle walking on the line $\mathbb{Z}$ of integers. In
    addition to the position of the particle on $\mathbb{Z}$, we take its spin
    into account. 
    Then the Hilbert space modeling the composite system
    of the particle's spin and position is $\mathbb{C}^2 \otimes
    \ell^2(\mathbb{Z})$. Write the basis vectors in Dirac notation as $\ku{k}$
    and $\kd{k}$.  We stipulate that the particle starts in state $\ku{0}$.
    The dynamics of the walk is given by the unitary operator
    \[ \begin{array}{rrcl}
        U: & \ku{k} &\mapsto& \ost \ku{k-1} + \ost \kd{k+1} \\
           & \kd{k} &\mapsto& \ost \ku{k-1} - \ost \kd{k+1}
    \end{array} \]
    Consider a situation in which it is possible to measure the position of
    the particle, but not the spin. Then the admissible observables are
    $\ku{k}\bu{k} + \kd{k}\bd{k}$ for $k\in \mathbb{Z}$.

    If the particle walks during $n$ time steps, then its position can be
    described using a probability distribution over $\mathbb{Z}$. The
    probability that we encounter the particle on position $k$ is 
    \[ \bk{U^n (\uparrow \! 0)} {\big(\ku{k}\bu{k} + \kd{k}\bd{k}\big)} {U^n
    (\uparrow \! 0)}. \]
    Denote the probability distribution after $n$ steps by $\varphi_n \in
    \D(\mathbb{Z})$. The unit distribution $k$ is written using Dirac notation
    as $\ket{k}$. The first few probability distributions are:
    \[ \begin{array}{rcccccccc}
        \varphi_0 &=& & & & \ket{0} \\
        \varphi_1 &=& & & \frac{1}{2}\ket{-1} &+& \frac{1}{2}\ket{1} \\
        \varphi_2 &=& & \frac{1}{4}\ket{-2} &+& \frac{1}{2}\ket{0} &+&
        \frac{1}{4}\ket{2} \\
        \varphi_3 &=& \frac{1}{8}\ket{-3} &+& \frac{5}{8}\ket{-1} &+&
        \frac{1}{8}\ket{1} &+& \frac{1}{8}\ket{3}
    \end{array} \]
\end{example}

We would like to model quantum systems in such a way that the system is a
coalgebra for a certain endofunctor, and the morphism into the final coalgebra
gives the quantum language or probability distribution determined by the
system. To achieve this, we will work with coalgebras in the category $\Conv$
of convex algebras. There are two reasons for this. Firstly, we can model
computations for both systems in pure states and systems in mixed states.
Secondly, the category $\Conv$ incorporates both quantum probability via
density matrices and the classical probability that is needed for the output.

Let $\Hilbisometry$ be the category of Hilbert spaces with isometric
embeddings as morphisms. Define a functor $ \DM : \Hilbisometry \to \Conv $ on
objects by letting $\DM(H)$ be the set of density matrices on the Hilbert
space $H$, and on morphisms by $\DM(f: H \to K)(\rho) = f \rho f^\dag $.

Let $H$ be a Hilbert space underlying a quantum system, and let $\unops$ be a
set of unitary operators that can be applied to the system. Represent the
possible measurements on the system by a subset $\meass$ of the set of effects
$\Ef(H)$ on $H$. We do
not use the entire set $\Ef(H)$ since usually not all effects are possible or
interesting. It is often the case that the sum of the effects in $\meass$ is
$\id$. In this case, the subset $\meass$ is called a \emph{test}, see
\cite{dvurecenskij} for more information.  Consider the functor 
\begin{equation}
    F(X) = [0,1]^{\meass} \times X^{\unops}
    \label{eq_schrofunctor}
\end{equation}
on the category $\Conv$. Form the $F$-coalgebra
\begin{equation}
    \label{eq_schrocoalg} 
    \begin{array}{rcl}
    f:  \DM(H) &\to& [0,1]^{\meass} \times \DM(H)^{\unops} \\
     \rho &\mapsto& \left( \left(\tr(\rho \effect) \right)_{\effect \in
    \meass}, \left( \DM(U)(\rho) \right)_{U\in \unops} \right).
\end{array}
\end{equation}
The part $\tr(\rho \effect)$ represents the observations on the coalgebra, since
this is the probability that measurement of the effect $\effect$ succeeds when
the system is in mixed state $\rho$. The part $\DM(U)(\rho)$ is the evolution
of the system according to the density matrix formalism.

We will now show that \eqref{eq_schrocoalg} has the desired behaviour. First
we have to check that the behaviour exists, i.e. that $F$ has a final
coalgebra. 

The category $\Conv$ inherits all products from $\Sets$, so
Lemma~\ref{lem_generalizedautomata} applies to the functor defined in
\eqref{eq_schrofunctor}. The functor $F$ has a final coalgebra with the convex
set $([0,1]^{\meass})^{\unops^*}$ as state space. The dynamics is a map
$([0,1]^{\meass})^{\unops^*} \to [0,1]^{\meass} \times (
([0,1]^{\meass})^{\unops^*})^{\unops}$. The first component of the dynamics is
the projection onto the component with the empty word $\varepsilon$ as index,
and for the second component, the map with index $a \in \unops$ and $u\in
\unops^*$ is the projection onto component $au$.

The coalgebra \eqref{eq_schrocoalg} gives a behaviour map $\beh{f} : \DM(H)
\to ([0,1]^{\meass})^{\unops^*}$. This map can alternatively be seen as a
indexed family of maps $\DM(H) \to [0,1]$ for each $\effect \in \meass$ and
$u\in \unops^*$. The map with index $\effect$ and $u = u_1 \ldots u_n \in
\unops^*$ sends the density matrix $\rho$ to $\tr(u_n \ldots u_1 \rho u_1^\dag
\ldots u_n^\dag \effect)$. Physically, if we view effects as yes-no questions
about a system, this is the probability that measurement of the effect
$\effect$ yields outcome `yes', if the system starts in mixed state $\rho$ and
evolves according to the unitary operators $u_1$, \dots, $u_n$. Therefore the
final coalgebra semantics corresponds exactly to the physical behaviour.

The construction of the coalgebra \eqref{eq_schrocoalg} involves three
parameters: the underlying state space $H$, the set of unitary operators
$\unops$, and the set of possible measurements $\meass$. By choosing these
parameters appropriately we can fit the above examples in this framework.

\begin{example}
    \label{ex_quantumsystems}
    Let $(H,(\delta_a)_{a \in A}, \effect)$ be a quantum automaton. The set of
    unitaries $\unops$ is $\{\delta_a \mid a \in A\}$, and the set of possible
    measurements is the singleton set $\meass = \{\effect\}$. The coalgebra
    \eqref{eq_schrocoalg} becomes
    \[ \begin{array}{rcl}
        \DM(H) &\to& [0,1] \times \DM(H)^A \\
        \rho &\mapsto& \left( \tr(\rho \effect), \left( \delta_a \rho
        \delta_a^\dag \right)_{a\in A} \right)
    \end{array} \]
    If $\psi \in H$ is a state, then the behaviour map $\DM(H) \to
    [0,1]^{A^*}$ maps the associated density matrix $\ket{\psi}\bra{\psi}$ to
    the quantum language recognized by $\psi$.

    The quantum walk from Example~\ref{ex_qwalk} gives a coalgebra
    \[ \begin{array}{rcl}
        \DM(\mathbb{C}^2 \otimes \ell^2(\mathbb{Z})) &\to& [0,1]^{\mathbb{Z}}
        \times \DM(\mathbb{C}^2 \otimes \ell^2(\mathbb{Z})) \\
        \rho &\mapsto& \left( \left( \tr(\rho \ku{k}\bu{k} + \rho
        \kd{k}\bd{k}) \right)_{k\in \mathbb{Z}} , U \rho U^\dag \right).
    \end{array} \]
    The codomain can be restricted to $\D_{\omega}(\mathbb{Z}) \times
    \DM(\mathbb{C}^2 \otimes \ell^2(\mathbb{Z}))$, because the effects in
    $\meass$ sum to $\id$. Here $\D_\omega$ is the infinite distribution
    monad, defined on objects by
    \[ \textstyle
    \D_\omega(X) = \{ \varphi : X \to [0,1]
    \mid \supp(\varphi) \text{ is at most countable and } \sum_{x\in X}
    \varphi(x) = 1 \}. \]
    On morphisms, $\D_\omega$ acts the same as the ordinary distribution
    monad $\D$.
\end{example}

These examples show that the representation \eqref{eq_schrocoalg} captures
many quantum systems in a uniform way.

\section{Minimization}\label{sec_minimization}

In Section~\ref{sec_coalgebras} we presented a procedure to find minimal
automata for classical languages. The same method can be used to find minimal
realizations for quantum behaviour.

Consider coalgebras for the functor $F(X) = [0,1]^E \times X^S$, which was
defined in \eqref{eq_schrofunctor}. The final coalgebra for $F$ has underlying
state space $([0,1]^E)^{S^*}$. Then the minimal realization of a behaviour $L
\in ([0,1]^E)^{S^*}$ is the smallest subcoalgebra of $([0,1]^E)^{S^*}$
containing $L$.  An interesting feature of this approach is that the minimal
coalgebra need not be a quantum coalgebra, even if the behaviour arises from a
quantum mechanical system. Since the subcoalgebra of the final coalgebra lies
in the category $\Conv$, the minimal realization will certainly be a
probabilistic system, but its state space is not necessarily of the form
$\DM(H)$. Thus minimization gives a procedure for turning quantum systems into
simpler systems in $\Conv$ which nevertheless have the same behaviour.

\begin{example}
    \label{ex_miminization}
    We consider a quantum version of Example~\ref{ex_markovchain}. The square
    graph gives rise to the state space $\mathbb{C}^4$ with basis $\ket{0}$,
    $\ket{1}$, $\ket{2}$, $\ket{3}$, and the walk of a quantum particle on the
    square graph is represented by the unitary matrix
    \[ U = \frac{1}{\sqrt{2}} \left(
    \begin{array}{cccc}
        0 & 1 & 1 & 0 \\
        1 & 0 & 0 & 1 \\
        1 & 0 & 0 & -1 \\
        0 & 1 & -1 & 0
    \end{array}
    \right). \]
    For each vertex $k$, we can measure the probability that the particle is
    in state $k$ using the projection $\ket{k}\bra{k}$. This leads to the
    coalgebra
    \[ \begin{array}{rcl}
        f:  \DM(\mathbb{C}^4) &\to& \D(4) \times \DM(\mathbb{C}^4)  \\
         \rho &\mapsto& \left( \left(\tr(\rho \ket{k}\bra{k}) \right)_{
        k=0,1,2,3}, U \rho U^\dag \right).
    \end{array} \]
    Assume that $\ket{0}\bra{0} \in \DM(\mathbb{C}^4)$ is the initial state.
    The resulting output stream is 
    \[ \sigma = \left( \ket{0}, \frac{1}{2}\ket{1} + \frac{1}{2}\ket{2}
    \right)^{\omega}. \]
    The minimal coalgebra with this behaviour is the smallest subcoalgebra of
    $\D(4)^{\mathbb{N}}$ containing $\sigma$. We shall compute this
    subcoalgebra by determining the states of $\D(4)^{\mathbb{N}}$ that can be
    reached from $\sigma$ with the transition structure, and then taking the
    convex set generated by the reachable states.

    The transition structure of the final coalgebra is given by the function
    $\tail: \D(4)^{\mathbb{N}} \to \D(4)^{\mathbb{N}}$ defined by
    $\tail(x_0,x_1,x_2,\ldots) = (x_1,x_2,\ldots)$. Applying the $\tail$
    function repeatedly to $\sigma$ gives the streams $\sigma$ and $\sigma' =
    \left( \frac{1}{2}\ket{1} + \frac{1}{2}\ket{2}, \ket{0} \right)^{\omega}$.
    The convex algebra generated by $\sigma$ and $\sigma'$ inside
    $\D(4)^{\mathbb{N}}$ is
    \[ 
    X = \left\{ \left( \begin{array}{c}
        p\ket{0} + \tfrac{1}{2}(1-p)\ket{1} + \tfrac{1}{2}(1-p)\ket{2}, \\ 
        (1-p)\ket{0} + \tfrac{1}{2}p\ket{1} + \tfrac{1}{2}p\ket{2}
        \end{array}
        \right)^{\omega} \ \bigg| \  p\in [0,1] \right\}.
    \]
    This is the minimal coalgebra with behaviour $\sigma$ and hence the
    minimization of $f$. The coalgebra structure is obtained as restriction of
    the final coalgebra $\D(4)^{\mathbb{N}}$.  By projecting onto the first
    coordinate twice we obtain an affine isomorphism between $X$ and $[0,1]$.
    Therefore a more elementary display of the minimization is
    \[ \begin{array}{rcl}
        [0,1] &\to& \D(4) \times [0,1] \\
        p &\mapsto& \left( p\ket{0} + \tfrac{1}{2}(1-p)\ket{1} +
        \tfrac{1}{2}(1-p)\ket{2}, 1-p \right)
    \end{array} \]
    Observe that the state space $[0,1]$ of the minimization is more
    manageable than the original state space $\DM(\mathbb{C}^4)$. The
    minimization is not a quantum system anymore, since $[0,1]$ is not
    isomorphic to a convex set of density matrices. It can be seen as a
    probabilistic system with two states, since $[0,1] \cong \D(2)$. Thus the
    behaviour of this quantum mechanical system can be reproduced with a
    simpler probabilistic system. Note that we only consider the classically
    observable probabilities as part of the behaviour, not the quantum
    mechanical amplitudes. Therefore the minimization only contains
    information about the classically probabilistic behaviour.
\end{example}

\section{Conclusion}
\label{sec_conclusion}

Quantum systems can be represented by using density matrices as states, with
unitary operators acting on them as dynamics. We have exploited this fact to
model quantum systems as coalgebras in the category $\Conv$ of convex sets,
with a set of density matrices as state space. A consequence of this
representation is that minimization of coalgebras gives a method to transform
a quantum system into a probabilistic system that has the same probabilistic
behaviour, but a simpler structure. 

There are several possibilities for future research. We have only studied
minimization of quantum systems empirically through examples. It would be nice
to have more general results about the structure of minimal realizations.
Moreover it is unclear if quantum systems can also be modeled in such a way
that minimization gives a quantum coalgebra again.

\paragraph{Acknowledgements.} 
This work is based on my master's thesis \cite{roumen}.
I would like to thank my thesis supervisor Bart Jacobs for many helpful
discussions and comments.

\bibliographystyle{eptcs}
\bibliography{quantumcoalg}

\end{document}